\newtheorem{proposition}{Proposition}
\newcommand{\h}[1]{\mathcal{#1}}
\newcommand{\R}{\mathbb{R}}
\newcommand{\C}{\mathbb{C}}
\newcommand{\hil}{\mathcal{H}}
\newcommand{\lh}{\mathcal{L(H)}}
\newcommand{\sfq}{\mathsf{Q}}
\newcommand{\E}{\mathsf{E}}
\newcommand{\G}{\mathsf{G}}
\newcommand{\tr}[1]{\mathrm{tr}\left[ {#1} \right]}
\begin{document}

\title{Tilted phase space measurements}

\author{Pekka Lahti}
\email{pekka.lahti@utu.fi}
\affiliation{Turku Centre for Quantum Physics, Department of Physics and Astronomy, University of Turku, FI-20014 Turku, Finland}

\author{Jussi Schultz}
\email{jussi.schultz@utu.fi}
\affiliation{Turku Centre for Quantum Physics, Department of Physics and Astronomy, University of Turku, FI-20014 Turku, Finland}

\date{\today}
\begin{abstract}
We show that the phase shift of $\frac{\pi}{2}$ is crucial for the phase space translation covariance of the measured high-amplitude limit observable in eight-port homodyne detection. However, for an arbitrary phase shift $\theta$ we construct explicitly a different nonequivalent projective representation of $\R^2$ such that the observable is covariant with respect to this representation. As a result we are able to determine  the measured observable for an arbitrary parameter field and phase shift. Geometrically the change in the phase shift corresponds to the tilting of one axis in the phase space of the system.
\end{abstract}

\pacs{03.65.-w, 03.67.-a, 42.50.-p}
\maketitle

Covariant phase space observables are an invaluable tool when studying many fundamental questions within the quantum theory. On one hand, the measurement of such an observable constitutes an approximate joint measurement of the position and momentum of a quantum system, thus allowing us to gain deeper insight into the full content of Heisenberg's uncertainty principle. These joint measurements are even optimal in some sense, since to any approximate joint observable for position and momentum there exists a covariant one which serves as a better approximation \cite{Werner2}. On the other hand, these observables can be used for the purpose of continuous variable quantum tomography. Indeed, a large class of covariant phase space observables are such that the measurement outcome statistics determine the state of the system uniquely \cite{Ali}. This is a fact whose significance increases alongside with the development of quantum information technology, where the possibility of determining the quantum state of a system is often of great significance.

The measurement of any covariant phase space observable can be realized experimentally via eight-port homodyne detection provided that a suitable parameter field is fed into one of the four input ports \cite{Kiukas2}. Therefore the detailed study of this particular scheme has strong physical motivation. The crucial ingredient responsible for the covariance is the phase shifter which is always assumed to provide a phase shift of $\frac{\pi}{2}$. In fact, we show that any deviation from this presumed value destroys the covariance of the observable. Since the structure of the observable depends so strongly on the covariance, it is not  justified to make any {\em a priori} assumptions on its properties when the phase shift is arbitrary.

In this Letter we give an explicit and rigorous construction of the measured observable for an arbitrary phase shift. We show that for each phase shift, there is a corresponding projective representation of $\R^2$ such that the observable is covariant with respect to it. In this way, we obtain explicit forms for a large class of nonequivalent representations and show that they have clear physical meanings, since the corresponding observables serve as approximate joint observables for pairs of rotated quadratures. The change of representations can also be interpreted geometrically; it corresponds to a tilting of one axis in the phase space of the system (see Figure \ref{axis}).

\begin{figure}
\includegraphics[width=8cm]{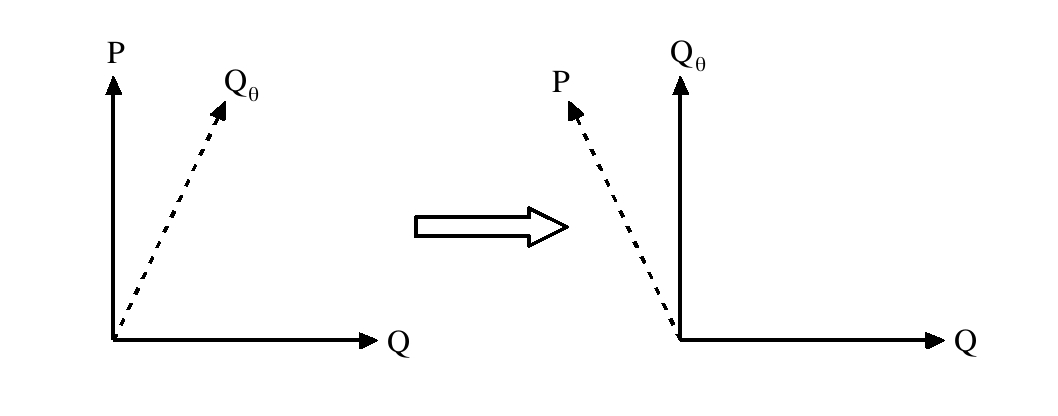}
\caption{Tilting of the phase space caused by a change in the phase shift.}\label{axis}
\end{figure}

Let us briefly recall the mathematical framework of our study. Let $\hil$ be the Hilbert space associated with a quantum system, and let $\lh$ denote the set of bounded operators acting on $\hil$. The states of the system are represented by positive operators with unit trace, and the observables are represented by normalized positive operator measures $\E :\h B(\R^n)\rightarrow \h L(\hil)$, where $\h B(\R^n)$ stands for the Borel $\sigma$-algebra of subsets of the measurement outcome space $\R^n$. For a system in a state $\rho$, the measurement outcome statistics of an observable $\mathsf{E}$ is given by the probability measure $\mathsf{E}_\rho :\h B (\R^n)\rightarrow [0,1]$, $\mathsf{E}_\rho (X) =\textrm{tr} [\rho \mathsf{E} (X)]$. Any two observables $\mathsf{E}$ and $\mathsf{F}$ are said to be {\em informationally equivalent} if they distinguish exactly the same states, that is, if $\mathsf{E}_{\rho_1} =\mathsf{E}_{\rho_2} $ if and only if $\mathsf{F}_{\rho_1} =\mathsf{F}_{\rho_2} $. If the measurement outcome statistics of $\mathsf{E}$ determine the state uniquely, then $\mathsf{E}$ is {\em informationally complete} \cite{Prugovecki}.

The eight-port homodyne detector consists of four input modes with Hilbert spaces $\hil_j$, $j=1,2,3,4$, four lossless $50:50$ beam splitters, a phase shifter and four photon detectors (see Figure \ref{eightport}). We fix the number basis $\{ \vert n\rangle \vert n=0,1,2,\ldots \}$ for each mode so that, in particular, the coherent states $\{ \vert z\rangle \vert z\in\C\}$  are defined by the expression
$$
\vert z\rangle = e^{-\frac{\vert z\vert^2}{2}} \sum_{n=0}^\infty \frac{z^n}{\sqrt{n!}} \vert n\rangle.
$$
The beam splitters are modelled by the unitary operators $U_{ij}$, determined by their action on the coherent states:
$$
U_{ij} \vert z\rangle \otimes \vert w\rangle =\vert \tfrac{1}{\sqrt{2}} (z-w)\rangle \otimes \vert \tfrac{1}{\sqrt{2}} (z+w)\rangle
$$
Here the subscripts refer to the primary and secondary modes, that is, the first and second components in the tensor product. In Figure \ref{eightport} the dashed side of the beam splitter represents the primary input mode. The phase shifter with phase shift $\theta$ is modelled by the unitary operator $R(\theta)=e^{i\theta N}$, where $N$ is the number operator.

Let $\rho$ and $\sigma $ be the states of modes 1 and 2, and let the local oscillator in mode 4 be in the coherent state $\vert \sqrt{2} z\rangle$. Mode 3 is assumed to be in the vacuum state $\vert 0 \rangle$. We detect the scaled number differences $\frac{1}{\sqrt{2}\vert z\vert}N_{13}^-$ and $\frac{1}{\sqrt{2}\vert z\vert}N_{24}^-$, where for example \cite{closure}
$$
\tfrac{1}{\sqrt{2}\vert z\vert}N_{13}^- =\tfrac{1}{\sqrt{2}\vert z\vert}( I_1 \otimes N_3 - N_1 \otimes I_3),
$$
and $N_j$ is the number operator related to the $j$th mode. The joint detection statistics are now described by the mapping \cite{biobservable}
$$
 (X,Y)\mapsto \E^{(\sqrt{2}\vert z\vert)^{-1}N_{13}^-} (X)\otimes \E^{(\sqrt{2}\vert z\vert)^{-1}N_{24}^-} (Y),
$$
where $\E^{(\sqrt{2}\vert z\vert)^{-1}N_{ij}^-}$ stands for the spectral measure of $\tfrac{1}{\sqrt{2}\vert z\vert}N_{ij}^-$.

\begin{figure}
\includegraphics[width=8cm]{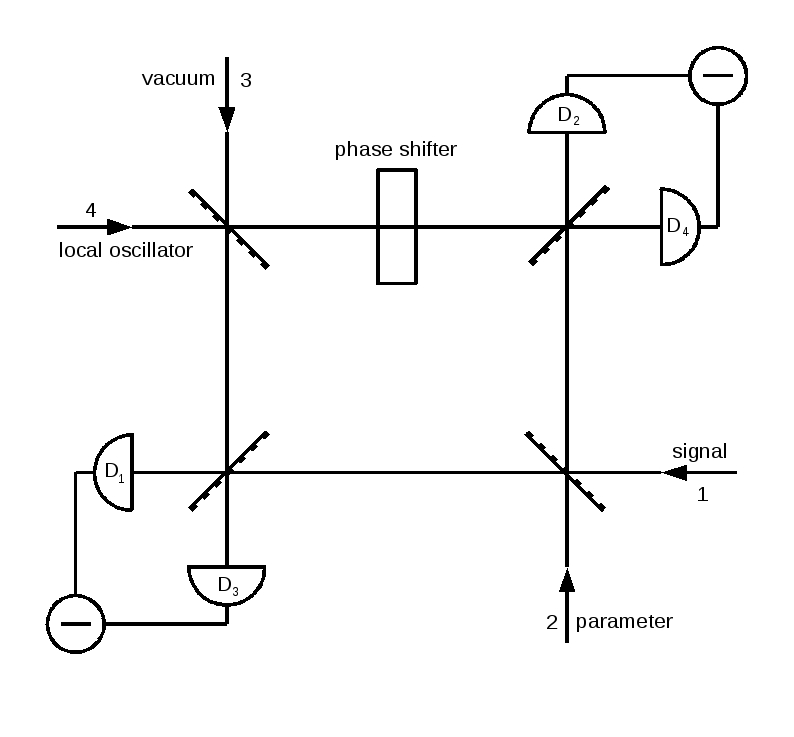}
\caption{Eight-port homodyne detector}\label{eightport}
\end{figure}

The high-amplitude limit $\vert z\vert\rightarrow \infty$ has been analyzed in detail in \cite{Kiukas2}. The signal observable $\E_{\theta,\sigma}$ measured in the high-amplitude limit is determined by the condition
\begin{eqnarray}
 \tr{\rho \E_{\theta, \sigma} (X\times Y) } &=&\textrm{tr}\big[ U_{12} (\rho\otimes \sigma ) U_{12}^* \nonumber\\
&&\times\sfq (\tfrac{1}{\sqrt{2}} X) \otimes \sfq_\theta(\tfrac{1}{\sqrt{2}}Y)\big]\label{highamplitude},
\end{eqnarray}
where $\sfq_\theta$ is the spectral measure of the quadrature operator \cite{closure} 
$Q_\theta =\frac{1}{\sqrt{2}} (e^{i\theta} a^* +e^{-i\theta} a )$ and $\sfq =\sfq_0$. With the choice $\theta =\tfrac{\pi}{2}$, this observable is covariant in the sense that
\begin{equation}\label{covariance}
 W(q,p) \E_{\frac{\pi}{2}, \sigma} (Z) W(q,p)^* =\E_{\frac{\pi}{2}, \sigma} (Z +(q,p)),
\end{equation}
where $W(q,p)= e^{i\frac{qp}{2}} e^{-iqP} e^{ipQ}$ is the Weyl operator. The general structure of covariant phase space observables is well-known; any  observable satisfying \eqref{covariance} is of the form \cite{Holevo, Werner} (for recent alternative proofs, see \cite{Kiukas, Cassinelli})
\begin{equation}\label{phasespaceobs}
 \G^S (Z) = \frac{1}{2\pi} \int_Z W(q,p) SW(q,p)^*\, dqdp
\end{equation}
for some unique positive trace one operator $S$, called the generating operator of $\G^S$. For $\theta=\frac{\pi}{2}$ the observable \eqref{highamplitude} is generated by the operator $C\sigma C^{-1}$, where $C$ is the conjugation map;  $(C\psi)(x)= \overline{\psi(x)}$, that is, $\E_{\frac{\pi}{2},\sigma} =\G^{C\sigma C^{-1}}$. However, for an arbitrary phase shift, it is straightforward to verify that the observable $\E_{\theta,\sigma}$ is not in general covariant. Indeed, we may use the coordinate representation
$$
(U_{12} \Psi )( x,y) = \Psi(\tfrac{1}{\sqrt{2}}(x+y), \tfrac{1}{\sqrt{2}}(-x+y))   
$$
for the beam splitter to calculate 
\begin{eqnarray*}
&&\tr{\rho W(q,p)\E_{\theta, \sigma} (X\times Y) W(q,p)^* } \\
 &=& \textrm{tr}[\rho \E_{\theta, \sigma} (X\times Y+ (q,q\cos\theta +p\sin\theta))  ]
\end{eqnarray*}
which shows that $\E_{\theta,\sigma}$ is covariant if and only if $\theta =\frac{\pi}{2}$. This means that for $\theta\neq\frac{\pi}{2}$ the measured observable is no longer of the form  \eqref{phasespaceobs}. However, we can solve the structure of the observable by considering covariance with respect to a different representation which we now construct explicitly.

For each $\theta \in (-\pi, 0) \cup (0,\pi)$ define a linear bijection $f_\theta:\R^2\rightarrow\R^2$  via 
$$
f_\theta(q,p) = (q,q\cos\theta +p\sin\theta).
$$
and define the tilted Weyl operators by $W_\theta (q,p) =W(f^{-1}_\theta(q,p))$. By operating on finite linear combinations of coherent states, we can verify the explicit form of these operators:
\begin{equation}
W_\theta (q,p) =e^{\frac{i}{2} \frac{qp}{\sin\theta}} e^{-\frac{iq}{\sin\theta}Q_\theta} e^{\frac{ip}{\sin\theta} Q}
\end{equation}
It is straightforward to check that the mapping $(q,p)\mapsto W_\theta (q,p)$ is in fact an irreducible projective unitary representation of $\R^2$. In particular, the equality 
$$
 W_\theta (q+q',p+p') = e^{\frac{i}{2} \frac{(qp'-pq')}{\sin\theta}}  W_\theta(q,p) W_\theta(q',p')
$$
holds for all $(q,p),(q',p')\in\R^2$. Note that this representation is not unitarily equivalent to $W$ unless $\theta =\frac{\pi}{2}$. Indeed, since 
$$
\left[ \frac{1}{\sin\theta} Q, \frac{1}{\sin\theta}Q_\theta\right] = \frac{i}{\sin\theta}I,
$$
the generators $\frac{1}{\sin\theta} Q$ and  $\frac{1}{\sin\theta}Q_\theta$ do not form a Weyl pair, and thus are not unitarily equivalent to $Q$ and $P$, unless $\theta =\frac{\pi}{2}$.

Similar to the case of \eqref{covariance}, we may consider the covariance of an observable with respect to this representation. We say that an observable $\E$ is a {\em $\theta$-covariant phase space observable} if 
$$
 W_{\theta}(q,p)\E (Z) W_\theta (q,p)^* =\E (Z+(q,p))
$$
for all $Z$ and $(q,p)$. The structure of such observables can be completely determined. In fact, according to \cite[Theorem 3]{Kiukas}, these observables are precisely of the form
\begin{equation}\label{thetaobs}
 \G^S_\theta (Z) = \frac{1}{2\pi\vert \sin\theta\vert} \int_Z W_\theta(q,p) SW_\theta(q,p)^*\, dqdp
\end{equation}
for some generating operator $S$. We immediately notice that even though $\G^S_\theta$ is not covariant in the usual sense, it is a function of a covariant observable. In fact, the equation $\G^S_\theta (Z) = \G^S (f^{-1}_\theta (Z))$ is easily verified for any set $Z$. In particular, the statistics of $\G^S_\theta$ can be calculated from the statistics of $\G^S$, and vice versa. As demonstrated in Figure \ref{densities}, for a fixed initial state and generating operator, the difference in measurement outcome statistics between two different values of $\theta$ can be viewed geometrically as a tilting of one axis in the phase space. Furthermore, since $f_\theta$ is a homeomorphism, we can state that \emph{the observables $\G^S$ and $\G^S_\theta$ are informationally equivalent}, so that in particular, \emph{$\G^S_\theta$ is informationally complete if and only if the support of the function $(q,p)\mapsto \tr{SW(q,p)}$ is the whole $\R^2$}.\cite{Kiukas3}

\begin{figure}
\includegraphics[width=3.6cm]{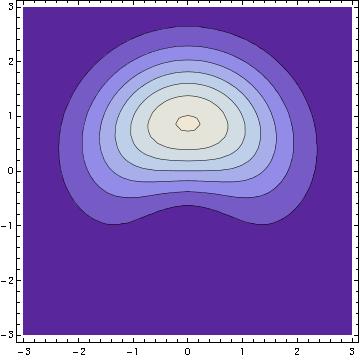}\qquad
\includegraphics[width=3.6cm]{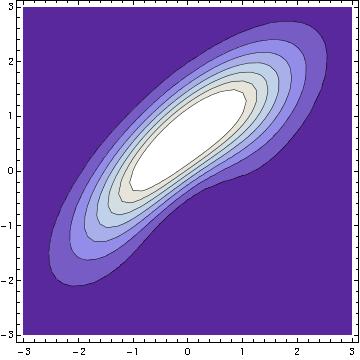}
\caption{Distributions corresponding to $\theta=\frac{\pi}{2}$ (left) and $\theta =\frac{\pi}{4}$ (right) for a fixed initial state and generating operator.}\label{densities}
\end{figure}

The Cartesian margins of $\G^S_\theta$ are the smeared quadratures $\mu^S *\sfq$ and $\mu_\theta^S*\sfq_\theta$ where, for instance, 
$$
\mu^S*\sfq (X) =\int \mu^S(X-q) \, d\sfq(q),
$$
and the convolving measures are given by $\mu^S_\theta (X) =\tr{\Pi S\Pi^* \sfq_\theta(X)}$ and  $\mu^S=\mu^S_0$, where $\Pi$ is the parity operator. The measurement of $\G^S_\theta$ thus consitutes an approximate joint measurement of the sharp quadratures $\sfq$ and $\sfq_\theta$. Furthermore, the proof of \cite[Proposition 7]{Carmeli} can be carried out using the tilted Weyl operators and we can therefore state that \emph{any two smeared quadratures  $\nu_1*\sfq$ and $\nu_2*\sfq_\theta$ have a joint observable if and only if they are margins of a $\theta$-covariant phase space observable.} Finally, Werner's result \cite{Werner2} also holds for $\sfq$ and $\sfq_\theta$: \emph{to any approximate joint observable for $\sfq$ and $\sfq_\theta$ there exists a $\theta$-covariant one which serves as a better approximation.}

In view of the eight-port homodyne detection scheme, it is easily seen that the observable defined in equation  \eqref{highamplitude} is in fact a $\theta$-covariant observable, that is, $\E_{\theta, \sigma} =\G^S_\theta$ for some $S$. Thus, what is left is the determination of the generating operator $S$, or more specifically, its dependence on the state of the parameter field $\sigma$. This is done in the following proposition, which constitutes the main result of this paper. We start with a few definitions.

Define the unitary operator $A_\theta$ by its action in the coordinate representation
$$
 (A_\theta \psi) (x) = \sqrt{\tfrac{ \vert\sin\theta\vert}{1+\cos\theta}} \, \psi (\sqrt{\tfrac{1-\cos\theta}{1+\cos\theta}} x)
$$
and the unitary operators $V_\theta$ by 
$$
 V_\theta = \left\{ \begin{array}{ll}
                  R\left(\tfrac{1}{2}(\theta-\pi)\right) A_\theta R\left(\tfrac{\theta}{2}\right), & \textrm{if } \theta \in(0,\pi)\\
 R\left(\tfrac{1}{2}(\theta+\pi)\right) A_\theta R\left(\tfrac{\theta}{2}\right), & \textrm{if }\theta \in (-\pi,0).
                 \end{array}\right.
$$
For any state $\sigma$, denote 
\begin{equation}\label{generator}
S_\theta(\sigma) =V_\theta C\sigma C^{-1}V_\theta^*.
\end{equation}
Now we are ready to prove  the explicit form of the generating operator.
\begin{proposition}
 For any parameter state $\sigma$ and phase shift $\theta\notin\{ 0,\pi\}$, the measured high-amplitude limit observable is $\G_\theta^{S_\theta(\sigma)}$.
\end{proposition}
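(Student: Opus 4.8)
The starting point is the observation, already made just above the proposition, that $\E_{\theta,\sigma}$ is $\theta$-covariant and therefore equals $\G_\theta^{S}$ for a \emph{unique} generating operator $S$, the uniqueness being inherited from \eqref{phasespaceobs} through the bijection $\G_\theta^{S}(Z)=\G^{S}(f_\theta^{-1}(Z))$. Hence the whole task is to identify this $S$ and to check that it coincides with $S_\theta(\sigma)=V_\theta C\sigma C^{-1}V_\theta^{*}$ of \eqref{generator}. Since the defining relation \eqref{highamplitude} is expressed through spectral measures of quadratures, the cleanest way to extract $S$ is via the Fourier transform of the outcome distribution. The plan is to compute, for an arbitrary state $\rho$, the characteristic function $\Phi_\rho(s,t)=\int e^{i(sx+ty)}\,\tr{\rho\,d\E_{\theta,\sigma}(x,y)}$, rewrite the scaled spectral measures as the Weyl exponentials $e^{is\sqrt2\,Q}$ and $e^{it\sqrt2\,Q_\theta}$, and use $Q_\theta=R(\theta)QR(\theta)^{*}$ to express both margins through the single quadrature $Q$ and the phase shifter $R(\theta)$.

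Each exponential is then a Weyl operator on its own mode, so \eqref{highamplitude} becomes the expectation in the state $U_{12}(\rho\otimes\sigma)U_{12}^{*}$ of a \emph{tensor product} of Weyl operators. The key simplification is that $U_{12}$ is a Gaussian (metaplectic) unitary implementing a linear symplectic mode-mixing, and conjugation by such a unitary sends a tensor-product Weyl operator to another tensor-product Weyl operator, up to a phase. Consequently $\Phi_\rho(s,t)$ factorizes as a phase times $\tr{\rho\,W(\xi)}\,\tr{\sigma\,W(\eta)}$ with $\xi,\eta$ linear in $(s,t)$. On the other side, the relation $\G_\theta^{S}(Z)=\G^{S}(f_\theta^{-1}(Z))$ turns the characteristic function of $\tr{\rho\,\G_\theta^{S}(\cdot)}$ into that of the ordinary covariant observable $\G^{S}$ evaluated at the tilted argument $f_\theta^{\mathrm{T}}(s,t)$; by the convolution structure behind \eqref{phasespaceobs} (the same structure that produces the smeared-quadrature margins above) this again factorizes as a phase times $\tr{\rho\,W(\xi')}\,\tr{S\,W(\eta')}$. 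Since $\rho$ is arbitrary, equating the two expressions forces $\xi=\xi'$ and hence $\tr{S\,W(\eta')}=\tr{\sigma\,W(\eta)}$ for all $(s,t)$, i.e.\ $\tr{S\,W(\zeta)}=\tr{\sigma\,W(L\zeta)}$ for a fixed linear bijection $L$.

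It then remains to read off $S$ from $\tr{S\,W(\zeta)}=\tr{\sigma\,W(L\zeta)}$. I would decompose $L$ into the momentum reflection $(q,p)\mapsto(q,-p)$, implemented by the antilinear conjugation $C$ via $C W(q,p)C^{-1}=W(q,-p)$, composed with a determinant-one (symplectic) map carrying the $\sqrt2$ scalings, the $45^{\circ}$ mixing of the beam splitter, the rotation from $R(\theta)$ and the inverse tilt. Bringing the latter into the Euler form rotation--dilation--rotation, with dilation factor $|\tan(\tfrac\theta2)|$, I identify its metaplectic representative with $R(\tfrac12(\theta\mp\pi))\,A_\theta\,R(\tfrac\theta2)=V_\theta$, the prefactor $\sqrt{|\tan(\tfrac\theta2)|}$ in $A_\theta$ being exactly what makes the dilation unitary. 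Unwinding both conjugations through $\tr{\sigma\,W(L\zeta)}$ then yields $\tr{S\,W(\zeta)}=\tr{V_\theta C\sigma C^{-1}V_\theta^{*}\,W(\zeta)}$ for all $\zeta$, whence $S=S_\theta(\sigma)$ by uniqueness of the Weyl transform. As a benchmark, at $\theta=\pi/2$ one has $A_{\pi/2}=I$ and $V_{\pi/2}=I$, recovering the known $\E_{\pi/2,\sigma}=\G^{C\sigma C^{-1}}$.

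The main obstacle is precisely this last identification: tracking the composite linear transformation through the whole chain and computing its metaplectic lift. The delicate points are that the tilt $f_\theta$ is itself non-symplectic, rescaling areas by $\sin\theta$, so this area factor must be matched against the normalization $\tfrac{1}{2\pi|\sin\theta|}$ of \eqref{thetaobs} (fixed by $\G_\theta^{S}(\R^2)=I$) rather than appearing in $S$; that the residual symplectic dilation is implemented by $A_\theta$ with exactly the above unitarizing prefactor; and that the sign of $\sin\theta$ forces the orientation-dependent case split in the definition of $V_\theta$. Locating the conjugation $C$ correctly, given the antilinearity, also requires care. Once these are pinned down the remaining manipulations are routine Gaussian-unitary algebra, and the positivity and unit trace of $S_\theta(\sigma)$ are automatic since it is a unitary conjugate of the state $C\sigma C^{-1}$.
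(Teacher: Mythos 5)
Your strategy is sound and it is genuinely different from the paper's. The paper does not \emph{derive} the generating operator at all; it \emph{verifies} the candidate $S_\theta(\sigma)$: it first reduces the problem to vacuum signal statistics using the duality $\tr{\rho\,\G_\theta^{S}(Z)}=\tr{S\,\G_\theta^{\rho}(-Z)}$ together with the informational completeness of $\G_\theta^{\vert 0\rangle}$, then checks the defining relation \eqref{highamplitude} for $\rho=\vert 0\rangle\langle 0\vert$ by an explicit computation of the matrix elements $\langle 0\vert W_\theta(q,p)V_\theta C\vert z_m\rangle\,\overline{\langle 0\vert W_\theta(q,p)V_\theta C\vert z_n\rangle}$ for finite superpositions of coherent states, and finally extends to arbitrary $\sigma$ by density and convexity. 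You instead derive $S$ from scratch: characteristic functions of both sides of \eqref{highamplitude}, factorization under the Gaussian conjugation by $U_{12}$, injectivity of the Weyl transform to reach $\tr{S\,W(\zeta)}=\tr{\sigma\,W(L\zeta)}$, and an Euler (rotation--dilation--rotation) decomposition of the symplectic part of $L$ whose metaplectic lift is $V_\theta$; your identification of $A_\theta$ as the unitary dilation by $\vert\tan(\tfrac{\theta}{2})\vert$, your matching of the Jacobian of $f_\theta$ against the normalization $\tfrac{1}{2\pi\vert\sin\theta\vert}$ of \eqref{thetaobs}, and your benchmark $V_{\pi/2}=I$ are all correct. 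What your route buys: it explains where $V_\theta$ comes from rather than positing it, it needs no informational-completeness input (Weyl-transform injectivity on trace class suffices), and it handles arbitrary $\rho$ and $\sigma$ in one stroke, with no density or convexity step. What the paper's route buys: it stays at the level of elementary coherent-state matrix elements, with no metaplectic machinery and essentially no phase bookkeeping. Two points you must nail down when writing yours in full: (i) the phases in the two factorizations have to cancel identically---a residual phase linear in $(s,t)$ would amount to an extra displacement conjugating $S$, so this is not cosmetic; and (ii) the step ``equating the two expressions forces $\xi=\xi'$'' should be argued near the origin, where both scalar factors are nonzero by continuity and the value $1$ at $(s,t)=(0,0)$, and then extended by linearity of $\xi,\xi'$ in $(s,t)$, or else simply read off from the explicit linear maps on each side.
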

\begin{proof}
First of all, notice that $\tr{\rho \G^{S_\theta(\sigma)}_\theta (Z)} =\tr{S_\theta(\sigma) \G^\rho_\theta (-Z)}$ for all $Z$, so that in  particular, $\langle 0\vert \G^{S_\theta(\sigma)}_\theta  (Z) \vert 0\rangle =\tr{S_\theta(\sigma) \G^{\vert 0\rangle}_\theta (-Z)}$. Since the observable $\G^{\vert 0\rangle}_\theta$ is informationally complete, it follows that the measurement outcome statistics corresponding to the vacuum signal state is sufficient to determine the generating operator.

Let $\sigma=\vert \psi\rangle\langle \psi\vert,$ where $\psi$ is a finite linear combination of coherent states, $\psi =\sum_{n=1}^k c_n \vert z_n\rangle$. A direct calculation shows that
\begin{eqnarray*}
 &&\langle 0\vert \G_\theta^{S_\theta(\sigma)} (X\times Y)\vert 0\rangle \\
&=& \frac{1}{2\pi\vert\sin\theta\vert} \int_{X\times Y} \big\vert \langle 0\vert W_\theta (q,p) V_\theta C\psi\rangle\big\vert^2 \, dqdp\\
&=& \frac{1}{2\pi\vert\sin\theta\vert} \sum_{m,n=1}^k \overline{c_m} c_n \int_{X\times Y}  \langle 0\vert W_\theta (q,p) V_\theta C\vert z_m\rangle\\
&&\qquad\times \overline{\langle 0\vert W_\theta (q,p) V_\theta C \vert z_n\rangle} \, dqdp\\
&=& \sum_{m,n=1}^k \overline{c_m} c_n  \textrm{tr}[U_{12} (\vert 0\rangle\langle 0\vert \otimes \vert z_n \rangle\langle z_m\vert )U_{12}^* \\
&&\qquad\times\sfq (\tfrac{1}{\sqrt{2}} X) \otimes \sfq_\theta (\tfrac{1}{\sqrt{2}} Y)]\\
&=& \tr{U_{12} (\vert 0\rangle\langle 0\vert \otimes \vert \psi \rangle\langle \psi\vert )U_{12}^* \sfq (\tfrac{1}{\sqrt{2}} X) \otimes \sfq_\theta (\tfrac{1}{\sqrt{2}} Y)} 
\end{eqnarray*}
and since the linear combinations of coherent states are dense in $\hil$, it follows that this equation holds for an arbitrary vector state $\psi$.

Now let $\sigma $ be an arbitrary state. By using the convex decomposition $\sigma =\sum_{n=0}^\infty \lambda_n P_n$, where $P_n=\vert \psi_n\rangle\langle \psi_n\vert$, we find that 
\begin{eqnarray*}
&&\langle 0\vert \G_\theta^{S_\theta(\sigma)} (X\times Y)\vert 0\rangle = \sum_{n=0}^\infty \lambda_n  \langle 0\vert \G_\theta^{S_\theta(P_n)} (X\times Y)\vert 0\rangle  \\
&=& \sum_{n=0}^\infty \lambda_n\tr{U_{12}^* (\vert 0\rangle\langle 0\vert \otimes P_n )U_{12}^* \sfq (\tfrac{1}{\sqrt{2}} X) \otimes \sfq_\theta (\tfrac{1}{\sqrt{2}} Y)}\\
&=& \tr{U_{12}^* (\vert 0\rangle\langle 0\vert \otimes \sigma )U_{12}^* \sfq (\tfrac{1}{\sqrt{2}} X) \otimes \sfq_\theta (\tfrac{1}{\sqrt{2}} Y)},
\end{eqnarray*}
from which it follows that 
\begin{eqnarray*}
\tr{\rho\G_\theta^{S_\theta(\sigma)} (X\times Y)} &=&\textrm{tr}[U_{12}^* (\rho \otimes \sigma )U_{12}^* \\
&&\qquad\times\sfq (\tfrac{1}{\sqrt{2}} X) \otimes \sfq_\theta (\tfrac{1}{\sqrt{2}} Y)]
\end{eqnarray*}
for all $\rho$ and $\sigma$, which proves our claim.
\end{proof}

\begin{figure}
\includegraphics[width=3.6cm]{Pic1.jpg}\qquad
\includegraphics[width=3.6cm]{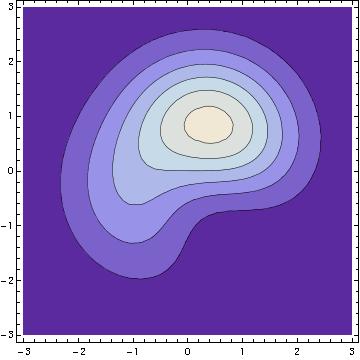}
\caption{Distributions corresponding to $\theta=\frac{\pi}{2}$ (left) and $\theta =\frac{\pi}{4}$ (right) for a fixed initial state and parameter field. Note that the generating operator changes according to \eqref{generator}.}\label{measured}
\end{figure}

As a final observation, we wish to take into account the non-unit quantum efficiencies of the photon detectors. A detailed analysis in the case of $\theta=\frac{\pi}{2}$ has been done in \cite{Lahti}. Suppose that each detector $D_j$ is assigned with a quantum efficiency $\epsilon_j \in (0,1)$, so that each detector constitutes a measurement of the approximate number observable

$$
n\mapsto \sum_{m=n}^\infty \binom{m}{n} \epsilon_j^n (1-\epsilon_j)^{m-n} \vert m\rangle\langle m\vert. 
$$
Then the spectral measures $\sfq$ and $\sfq_\theta$ in the defining equation \eqref{highamplitude} are replaced with their smearings $\mu_{13}*\sfq$ and $\mu_{24}*\sfq_\theta$, where for instance, $\mu_{13}$ is a probability measure having a Gaussian density $\sqrt{\tfrac{ 2\epsilon_1\epsilon_3}{\pi(\epsilon_1 -2\epsilon_1\epsilon_3 +\epsilon_3)}} e^{-\tfrac{ 2\epsilon_1\epsilon_3}{\epsilon_1 -2\epsilon_1\epsilon_3 +\epsilon_3} x^2}$. Then, by defining the phase space probability measure $\mu_\epsilon (X\times Y) =\mu_{13} (\tfrac{1}{\sqrt{2}} X) \mu_{24}(\tfrac{1}{\sqrt{2}} Y)$, we can show that the measured observable is the smeared $\theta$-covariant phase space observable $\mu_\epsilon* \G^{S(\sigma)}$. This smearing is also covariant with respect to the same representation and thus there exists a generating operator $S_\theta'(\sigma)$ such that $\mu_\epsilon *\G^{S_\theta(\sigma)} =\G^{S_\theta'(\sigma)}$. Just as in the case $\theta =\frac{\pi}{2}$ treated in \cite{Lahti}, this generating operator can be expressed as a convolution $S'_\theta(\sigma) =\mu_\epsilon *S_\theta(\sigma)$, where \cite{Werner}
$$
\mu_\epsilon *S_\theta(\sigma) =\int W_\theta (q,p)S_\theta(\sigma) W_\theta(q,p)^*\, d\mu_\epsilon (q,p).
$$
From the tomographic point of view, the presence of detector inefficiences is merely an inconvenience since, as shown in \cite{Lahti}, the smeared observable is informationally equivalent to the one corresponding to ideal detectors.

\textbf{Conclusions.} We have considered the eight-port homodyne detection scheme in the case of an arbitrary phase shift $\theta$. We have shown that to each phase shift there is a corresponding projective representation such that the measured high-amplitude limit observable is covariant with respect to it. We have interpreted this geometrically as a tilting of one axis in the phase space of the system. Furthermore, we have constructed explicitly the measured observable and considered some of its properties. In particular, we have shown that this measurement scheme can be used for quantum tomography and approximate joint measurements of an arbitrary pair of rotated quadratures.

\noindent
\textbf{Acknowledgment.} J. S. was supported by the Finnish Cultural Foundation during the research leading to this manuscript.


\begin{thebibliography}{99}

\expandafter\ifx\csname natexlab\endcsname\relax\def\natexlab#1{#1}\fi
\expandafter\ifx\csname bibnamefont\endcsname\relax
  \def\bibnamefont#1{#1}\fi
\expandafter\ifx\csname bibfnamefont\endcsname\relax
  \def\bibfnamefont#1{#1}\fi
\expandafter\ifx\csname citenamefont\endcsname\relax
  \def\citenamefont#1{#1}\fi
\expandafter\ifx\csname url\endcsname\relax
  \def\url#1{\texttt{#1}}\fi
\expandafter\ifx\csname urlprefix\endcsname\relax\def\urlprefix{URL }\fi
\providecommand{\bibinfo}[2]{#2}
\providecommand{\eprint}[2][]{\url{#2}}


\bibitem[{\citenamefont{Werner}(1984)}]{Werner2}
\bibinfo{author}{\bibfnamefont{R.}~\bibnamefont{Werner}},
    \bibinfo{journal}{Quantum Inf. Comput.} \textbf{\bibinfo{volume}{4}},
  \bibinfo{pages}{546} (\bibinfo{year}{2004}).



\bibitem[{\citenamefont{Ali et~al.}(2005)\citenamefont{Ali, and Prugove\v cki}}]{Ali}
\bibinfo{author}{\bibfnamefont{S. T.}~\bibnamefont{Ali}},
  \bibnamefont{and}
  \bibinfo{author}{\bibfnamefont{E.}~\bibnamefont{Prugove\v cki}},
  \bibinfo{journal}{Physica} \textbf{\bibinfo{volume}{89A}},
  \bibinfo{pages}{501} (\bibinfo{year}{1977}).



\bibitem[{\citenamefont{Prugovecki}(1977)}]{Prugovecki}
\bibinfo{author}{\bibfnamefont{E.}~\bibnamefont{Prugovecki}},
    \bibinfo{journal}{Int. J. Theor. Phys.} \textbf{\bibinfo{volume}{16}},
  \bibinfo{pages}{321} (\bibinfo{year}{1977}).



\bibitem[{\citenamefont{Kiukas et~al.}(2008)\citenamefont{Kiukas, and Lahti}}]{Kiukas2}
\bibinfo{title}{Heuristic arguments have long been known. A rigorous quantum mechanical proof is given in}
\bibinfo{author}{\bibfnamefont{J.}~\bibnamefont{Kiukas}},
   \bibnamefont{and}
  \bibinfo{author}{\bibfnamefont{P.}~\bibnamefont{Lahti}},
  \bibinfo{journal}{J. Mod. Opt.} \textbf{\bibinfo{volume}{55}},
  \bibinfo{pages}{1891} (\bibinfo{year}{2008}).


\bibitem{closure}
\bibinfo{title}{To be more precise, one needs to take the closure of this essentially self-adjoint operator.}



\bibitem[{\citenamefont{Lahti et~al.}(1998)\citenamefont{Lahti, {Pulmannova}, and Ylinen}}]{biobservable}
\bibinfo{title}{Technically, this mapping is a biobservable, see for instance,}
\bibinfo{author}{\bibfnamefont{P.}~\bibnamefont{Lahti}},
  \bibinfo{author}{\bibfnamefont{S.}~\bibnamefont{{Pulmannova}}},
  \bibnamefont{and}
  \bibinfo{author}{\bibfnamefont{K.}~\bibnamefont{Ylinen}},
  \bibinfo{journal}{J. Math. Phys.} \textbf{\bibinfo{volume}{39}},
  \bibinfo{pages}{6364} (\bibinfo{year}{1998}).




\bibitem[{\citenamefont{Holevo}(1979)}]{Holevo}
\bibinfo{author}{\bibfnamefont{A. S.}~\bibnamefont{Holevo}},
    \bibinfo{journal}{Rep. Math. Phys.} \textbf{\bibinfo{volume}{16}},
  \bibinfo{pages}{385} (\bibinfo{year}{1979}).

\bibitem[{\citenamefont{Werner}(1984)}]{Werner}
\bibinfo{author}{\bibfnamefont{R.}~\bibnamefont{Werner}},
    \bibinfo{journal}{J. Math. Phys.} \textbf{\bibinfo{volume}{25}},
  \bibinfo{pages}{1404} (\bibinfo{year}{1984}).

\bibitem[{\citenamefont{Kiukas et~al.}(2006)\citenamefont{Kiukas, {Lahti}, and Ylinen}}]{Kiukas}
\bibinfo{author}{\bibfnamefont{J.}~\bibnamefont{Kiukas}},
  \bibinfo{author}{\bibfnamefont{P.}~\bibnamefont{{Lahti}}},
  \bibnamefont{and}
  \bibinfo{author}{\bibfnamefont{K.}~\bibnamefont{Ylinen}},
  \bibinfo{journal}{J. Math. Anal. Appl.} \textbf{\bibinfo{volume}{319}},
  \bibinfo{pages}{783} (\bibinfo{year}{2006}).


\bibitem[{\citenamefont{Cassinelli et~al.}(2003)\citenamefont{Cassinelli, {De Vito}, and Toigo}}]{Cassinelli}
\bibinfo{author}{\bibfnamefont{G.}~\bibnamefont{Cassinelli}},
  \bibinfo{author}{\bibfnamefont{E.}~\bibnamefont{{De Vito}}},
  \bibnamefont{and}
  \bibinfo{author}{\bibfnamefont{A.}~\bibnamefont{Toigo}},
  \bibinfo{journal}{J. Math. Phys.} \textbf{\bibinfo{volume}{44}},
  \bibinfo{pages}{4768} (\bibinfo{year}{2003}).


\bibitem{Kiukas3}
\bibinfo{title}{J. Kiukas and R. Werner, private communication}.

\bibitem[{\citenamefont{Carmeli et~al.}(2005)\citenamefont{Carmeli, {Heinonen}, and Toigo}}]{Carmeli}
\bibinfo{author}{\bibfnamefont{C.}~\bibnamefont{Carmeli}},
  \bibinfo{author}{\bibfnamefont{T.}~\bibnamefont{{Heinonen}}},
  \bibnamefont{and}
  \bibinfo{author}{\bibfnamefont{A.}~\bibnamefont{Toigo}},
  \bibinfo{journal}{J. Phys. A: Math. Gen.} \textbf{\bibinfo{volume}{38}},
  \bibinfo{pages}{5253} (\bibinfo{year}{2005}).





\bibitem[{\citenamefont{Lahti et~al.}(2010)\citenamefont{Lahti, {Pellonp\"a\"a}, and Schultz}}]{Lahti}
\bibinfo{author}{\bibfnamefont{P.}~\bibnamefont{Lahti}},
  \bibinfo{author}{\bibfnamefont{J.-P.}~\bibnamefont{{Pellonp\"a\"a}}},
  \bibnamefont{and}
  \bibinfo{author}{\bibfnamefont{J.}~\bibnamefont{Schultz}},
  \bibinfo{journal}{J. Mod. Opt.} \textbf{\bibinfo{volume}{57}},
  \bibinfo{pages}{1171} (\bibinfo{year}{2010}).






\end{thebibliography}
\end{document}